\documentclass[subscriptcorrection,upint,varvw,barcolor=black,mathalfa=cal=euler,balance,hyphenate,french,pdf-a,nolists]{asmejour} %

\JourName{ASME Letters in Dynamic Systems and Control}%<=== change to the name of your journal

\usepackage{amsthm}

\hypersetup{colorlinks=true,linkcolor=blue,citecolor=blue}
\graphicspath{{Figs/}}

\newtheorem{theorem}{Theorem}

\newtheorem{definition}{Definition}
\newtheorem{remark}{Remark}
\newtheorem{corollary}{Corollary}

\catcode`\@=11
\def\downparenfill{$\m@th\braceld\leaders\vrule\hfill\bracerd$}
\def\overparen#1{\mathop{\vbox{\ialign{##\crcr\crcr \noalign{\kern0.4ex}
\downparenfill\crcr\noalign{\kern0.4ex\nointerlineskip}
$\hfil\displaystyle{#1}\hfil$\crcr}}}\limits}
\catcode`\@=12

\usepackage{color}

\begin{document}

\SetAuthorBlock{Tianyu Han}{
   Department of Mechanical Engineering,\\
   The City College of New York,\\
   The City University of New York,\\
   160 Convent Avenue,\\
   New York, NY 10031 USA \\
   email: than000@citymail.cuny.edu} 

\SetAuthorBlock{Bo Wang\CorrespondingAuthor}{
   Department of Mechanical Engineering,\\
   The City College of New York,\\
   The City University of New York,\\
   160 Convent Avenue,\\
   New York, NY 10031 USA \\
   email: bwang1@ccny.cuny.edu} 

\title{Scaling-Based Reciprocal Control Barrier Functions for Nonholonomic Mobile Robots}

\keywords{Safety-critical control, control barrier functions, nonholonomic vehicles, obstacle avoidance}

\begin{abstract}
This paper studies the construction of control barrier functions (CBFs) for force-controlled nonholonomic mobile robots subject to relative-degree-two safety constraints arising from position-level obstacle avoidance. A scaling-based reciprocal barrier construction is proposed, in which a positive motion-dependent scaling factor is placed in the numerator of a reciprocal barrier associated with the original physical safety function. The resulting barrier is defined exactly on the interior of the physical safe set and becomes singular on its boundary, thereby preserving the certified interior domain of the original safety constraint while recovering first-order control authority. For a force-controlled nonholonomic robot model, sufficient conditions are derived under which the proposed construction defines a reciprocal CBF, and the interior of the physical safe set is forward invariant under controllers satisfying the induced reciprocal-CBF condition. A scalar strict-feedback system is further used to provide a structural interpretation of the underlying higher-relative-degree cascade under explicit structural assumptions. Numerical simulations demonstrate the induced safe-set geometry and its integration with an optimization-based control framework for obstacle avoidance.

\end{abstract}

\date{}

\maketitle 

%% -------------------------------------------------------------------
\section{Introduction}

Control barrier functions (CBFs) have become a central tool in safety-critical control for enforcing forward invariance of prescribed safe sets \citep{ames2016control,Ames2019Control}. Their ability to encode safety conditions through inequality constraints has led to broad use in robotics and autonomous systems, including mobile robots \citep{Han2024Safety,Wang2025Further}, manipulators \citep{Singletary2022Safety}, and multi-agent systems \citep{jankovic2023multiagent}. Despite this success, the construction of effective barrier functions remains challenging for mechanical and robotic systems in which safety specifications are imposed at the configuration level while the available control inputs act through higher-order dynamics \citep{Cohen2024Safety}. In such cases, the physical safety function may have relative degree greater than one, so a standard first-order CBF condition cannot be imposed directly. The difficulty is particularly pronounced when one seeks a barrier construction that restores first-order control authority without replacing the original configuration-level safe set by a smaller certified set.

This difficulty arises naturally in obstacle avoidance for force-controlled nonholonomic mobile robots \citep{wang2022time,wang2022robust}. In such systems, collision avoidance is specified through a position-level safety constraint, whereas the control input acts at the acceleration level. Consequently, the corresponding physical safety function has relative degree two with respect to the input, and the standard first-order CBF condition cannot be applied directly. Existing approaches typically recover enforceability through recursive differentiation, additive augmentation, or activation mechanisms. While effective, these constructions may introduce auxiliary conditions or modified barrier domains that certify only a subset of the original physical safe set in some designs. These observations motivate the present work, which develops a scaling-based reciprocal barrier construction aimed at recovering first-order control authority while preserving the certified interior domain of the original physical safe set.

To address high-relative-degree safety constraints, existing CBF constructions typically modify the barrier condition so that control authority is recovered when the input does not directly affect the original safety function. High-order control barrier functions (HOCBFs) achieve this by recursively differentiating the safety function and imposing inequalities on successive derivatives until the control input appears~\citep{Xiao2022HOCBF}. This framework is general and widely used, but the resulting recursive conditions can introduce intermediate feasibility requirements that may restrict the certified set. Backstepping-based CBF constructions~\citep{Taylor2022Safe,Cohen2024Safety,Han2024Safety} recover control authority through virtual controls and tracking-error-type coordinates, providing a constructive approach for combining safety and stabilization; however, the induced barrier conditions may certify a domain that differs from the original physical constraint. Activation-based constructions, including rectified CBFs (ReCBFs)~\citep{Ong2024Rectified} and activated backstepping methods~\citep{Gacsi2025activedback}, introduce additional mechanisms to modify or activate barrier conditions when the nominal construction is ineffective, improving flexibility at the expense of extra design choices. Related output-based constructions~\citep{Cohen2024Constructive} define safety on suitable lower-dimensional outputs and lift the resulting conditions to the full dynamics, but their applicability depends on the availability of an appropriate output structure.

Despite these advances, a recurring difficulty remains in constructing CBFs that recover first-order enforceability while keeping the certified domain tied directly to the original physical safety constraint. This issue is especially relevant for position-level safety constraints in robotic systems, including force-controlled nonholonomic mobile robots. Rather than introducing additive, recursive, or auxiliary barrier terms that may alter the certified domain, this paper develops a scaling-based reciprocal barrier construction in which a positive state-dependent scaling factor is placed in the numerator of a reciprocal quotient associated with the physical safety function. The resulting barrier remains defined exactly on the interior of the original physical safe set and becomes singular as the state approaches the physical safety boundary. In this sense, preserving the original safe-set geometry means preserving the certified interior domain, while recovering explicit first-order control authority. 
The practical benefit targeted by this construction is preservation of the certificate domain, rather than a uniform reduction in closed-loop conservatism. This property is most relevant when maintaining a safety certificate over the entire geometrically safe domain is itself an important design objective. When bounded-input feasibility, robustness, or control-effort regulation is the primary concern, HOCBF- or backstepping-based constructions may instead be more appropriate. Thus, the proposed approach represents a different design trade-off, prioritizing certificate-domain preservation while treating feasibility and performance considerations separately.

\textit{Contributions.} This paper makes three main contributions. First, we propose a scaling-based reciprocal CBF construction for force-controlled nonholonomic mobile robots subject to relative-degree-two safety constraints arising from position-level obstacle avoidance. The construction places a positive, bounded, motion-dependent scaling factor in the numerator of a reciprocal barrier associated with the physical safety function. This design preserves the certified interior domain of the original physical safe set while restoring direct input dependence in the first-order barrier dynamics. Second, for the considered robot model, we prove that the resulting barrier is a reciprocal CBF and show that the interior of the physical safe set is forward invariant under controllers satisfying the induced reciprocal-CBF condition. 
Third, using a scalar strict-feedback system as a canonical cascade model, we provide a structural interpretation of the mechanism underlying the proposed construction under explicit relative-degree and control-authority assumptions.
Numerical simulations further show the induced safe-set geometry and its integration with an optimization-based obstacle-avoidance controller.

The remainder of this paper is organized as follows. Section~\ref{sec:formulation} reviews CBF preliminaries, introduces the force-controlled nonholonomic robot model, and formulates the problem. Section~\ref{sec:main} presents the proposed reciprocal CBF construction and the main invariance result. Section~\ref{sec:strict-feedback-system} provides a structural interpretation of the proposed mechanism via a scalar strict-feedback system.
Section~\ref{sec:simulations} provides numerical examples demonstrating the induced safe-set geometry and its integration with an optimization-based controller. Section~\ref{sec:conclusion} concludes the paper.

%% ----------------------------------------------------------------
\section{Preliminaries and Problem Formulation}\label{sec:formulation}

\textit{Notation.} The Euclidean norm on $\mathbb{R}^n$ is denoted by $|\cdot|$. For a set $S\subset\mathbb{R}^n$, $\partial S$ and $\operatorname{Int}(S)$ denote its boundary and interior, respectively. A continuous function $\alpha:\mathbb{R}_{\ge 0}\to\mathbb{R}_{\ge 0}$ is of class $\mathcal{K}$ if it is strictly increasing and satisfies $\alpha(0)=0$. It is of class $\mathcal{K}_\infty$ if, in addition, $\alpha(r)\to\infty$ as $r\to\infty$. A continuous function $\alpha:\mathbb{R}\to\mathbb{R}$ is an extended class $\mathcal{K}_\infty$ function, denoted by $\alpha\in\mathcal{K}_\infty^e$, if it is strictly increasing, satisfies $\alpha(0)=0$, and satisfies $\lim_{r\to\pm\infty}\alpha(r)=\pm\infty$.

\subsection{Safety and Control Barrier Functions}
We consider the nonlinear control-affine system
\begin{equation}
    \dot{x}=f(x)+g(x)u, \label{eq:nls}
\end{equation}
where $x\in\mathbb{R}^n$ is the state and $u\in\mathbb{R}^m$ is the control input. 
Throughout the paper, $f$ and $g$ are assumed to be smooth. For a locally Lipschitz state-feedback controller $u=k(x)$, the corresponding closed-loop system is
\begin{equation}
    \dot{x}=f_{\mathrm{cl}}(x):=f(x)+g(x)k(x). \label{eq:cl-nls}
\end{equation}
Let $h:\mathbb{R}^n\to\mathbb{R}$ be a continuously differentiable function, and define the set
\begin{equation}
    \mathcal{C}:=\{x\in\mathbb{R}^n:h(x)\ge 0\}.
\end{equation}
The set $\mathcal{C}$ is said to be \emph{forward invariant} for the closed-loop system~\eqref{eq:cl-nls} if, for every initial condition $x_0\in\mathcal{C}$, the corresponding trajectory satisfies $x(t)\in\mathcal{C}$ for all $t\ge 0$. The system is said to be \emph{safe} with respect to $\mathcal{C}$ if $\mathcal{C}$ is forward invariant~\citep{ames2016control,Ames2019Control}.

\begin{definition}[Zeroing CBF]\rm
A continuously differentiable function $h:\mathbb{R}^n\to\mathbb{R}$ is called a \textit{zeroing CBF} for system~\eqref{eq:nls} on an open set $\mathcal{S}\supset\mathcal{C}$, if there exists a function $\alpha_h\in\mathcal{K}_\infty^e$ such that
\begin{equation}
    \sup_{u\in\mathbb{R}^m}\bigl(L_fh(x)+L_gh(x)u\bigr)\ge -\alpha_h\bigl(h(x)\bigr),
    \quad \forall x\in\mathcal S.
\end{equation}
\end{definition}
The function $h$ is a zeroing CBF for system~\eqref{eq:nls} on $\mathcal S$ if and only if there exists a function $\alpha_h\in\mathcal{K}_\infty^e$ such that
\begin{equation}
    L_gh(x)=0 \implies L_fh(x)+\alpha_h\bigl(h(x)\bigr)\ge 0,
    \quad \forall x\in\mathcal S.
\end{equation}
In other words, a candidate barrier may be verified by examining the states at which the control input has no instantaneous effect on its derivative.

\begin{definition}[Reciprocal CBF]\rm~\label{def:rcbf}
A continuously differentiable function $B:\operatorname{Int}(\mathcal{C})\to\mathbb{R}$ is called a \textit{reciprocal CBF} for \eqref{eq:nls} on $\operatorname{Int}(\mathcal{C})$ if 
$B(x)>0$ for all $x\in\operatorname{Int}(\mathcal{C})$, $B(x)\to\infty$ as $x\to\partial \mathcal{C}$, and there exists a function $\alpha_B\in\mathcal{K}$  such that 
\begin{equation}\label{eq:60}
    \inf_{u\in\mathbb{R}^m}\Bigl(L_fB(x)+L_gB(x)u\Bigr)
    \le
    \alpha_B\!\left(\frac{1}{B(x)}\right),
    \quad \forall x\in\operatorname{Int}(\mathcal{C}).
\end{equation}
The function $B$ is a reciprocal CBF for system~\eqref{eq:nls} on $\operatorname{Int}(\mathcal{C})$ if and only if there exists a function $\alpha_B\in\mathcal{K}$ such that
\begin{equation}\label{eq:RCBF_def}
    L_gB(x)=0 \implies  L_fB(x)-\alpha_B\left(\frac{1}{B(x)}\right)\le 0, \quad\forall x\in\operatorname{Int}(\mathcal{C}).
\end{equation}
\end{definition}

\subsection{Nonholonomic Robot Model and Safety Constraint}

Consider a force-controlled nonholonomic robot described by
\begin{subequations}\label{eq:208}
    \begin{align}
        \dot{x} &= v \cos{\theta},\quad \dot{y} = v \sin{\theta},\quad \dot{\theta} = \omega, \\
        \dot v &= u_1, \quad \dot\omega = u_2,
    \end{align}
\end{subequations}
where $(x,y)\in\mathbb{R}^2$ denotes the robot position, $\theta\in\mathbb{R}$ is the heading angle, $v\in\mathbb{R}$ is the forward speed, $\omega\in\mathbb{R}$ is the angular speed, and $u=[u_1,u_2]^\top\in\mathbb{R}^2$ is the control input.
Denoting $\mathbf{x}:=[x,y,\theta,v,\omega]^\top$, \eqref{eq:208} can be written in the form of \eqref{eq:nls}, i.e.,  
\begin{equation}\label{eq:robots_force}
    \dot {\mathbf{x}} = f(\mathbf{x})+gu,
\end{equation}
where
\begin{equation}\label{eq:100}
    f(\mathbf{x})=
    \begin{bmatrix}
        v\cos\theta\\
        v\sin\theta\\
        \omega\\
        0\\
        0
    \end{bmatrix}
    \quad \text{and} \quad
    g=
    \begin{bmatrix}
        0&0\\
        0&0\\
        0&0\\
        1&0\\
        0&1
    \end{bmatrix}.
\end{equation}

We consider obstacle avoidance with respect to a circular obstacle centered at $c=[c_x,c_y]^\top\in\mathbb{R}^2$ with radius $R>0$. Let $p:=[x,y]^\top$. The associated physical safe set is given by
\begin{equation}\label{eq:safe_set_physical}
    \mathcal C_0:=\{\mathbf{x}\in\mathbb{R}^5:h_0(\mathbf{x}):=|p-c|^2-R^2\ge 0\}.
\end{equation}
Since $h_0$ depends only on the robot position $p$, the control input does not appear in its first time derivative. In particular, for system~\eqref{eq:robots_force}, one gets $L_g h_0(\mathbf{x}) = 0$.

To characterize the robot motion relative to the obstacle, define the radial distance and radial error, respectively, by
\begin{equation}\label{eq:r_er}
    r(\mathbf{x}):=|p-c| \quad\text{and}\quad e_r(\mathbf{x}):=r-R.
\end{equation}
Then, one has $h_0(\mathbf{x})=(r-R)(r+R)=e_r(r+R)$.
The radial velocity is given by
\begin{align}\label{eq:erdot_formulation}
    \dot e_r
    =
    \frac{(p-c)^\top}{|p-c|}
    \begin{bmatrix}
        \dot x\\
        \dot y
    \end{bmatrix} 
    =
    \frac{v}{r}[(x-c_x)\cos\theta+(y-c_y)\sin\theta].
\end{align}
On $\mathcal{C}_0$, one has $r\ge R>0$, so the radial coordinate is well defined.
The quantity $\dot{e}_r$ represents the radial component of the robot velocity relative to the obstacle and will be used to encode motion-dependent safety sensitivity in the barrier construction.

\subsection{Problem Statement}

For the mobile robot~\eqref{eq:robots_force}, the physical safety function $h_0$ cannot directly define a first-order CBF condition because the input does not appear in $\dot h_0$. The problem is therefore to recover first-order enforceability without replacing the original physical safe set by a smaller certified domain.

The objective of this paper is to construct a reciprocal CBF for obstacle avoidance whose domain is \textit{exactly} the interior of the physical safe set, namely $\operatorname{Int}(\mathcal C_0)$. This preserves the certified interior domain of the original physical safe set while allowing motion-dependent terms to restore first-order control authority. When a controller satisfies the induced reciprocal-CBF condition, the interior of the physical safe set is forward invariant. This condition can be implemented through optimization-based safety filters~\citep{ames2016control} or universal-formula-based controllers~\citep{wang2026universalformulafamiliessafe}.

%% -------------------------------------------------------------------
\section{Scaling-Based Reciprocal CBF Construction}\label{sec:main}

In this section, we construct a scaling-based reciprocal CBF for the force-controlled nonholonomic mobile robot and establish the associated forward-invariance result. 
Consider the system~\eqref{eq:robots_force}, together with the physical safe set~\eqref{eq:safe_set_physical} and its interior
$\operatorname{Int}(\mathcal{C}_0):=\{\mathbf{x}\in\mathbb{R}^5:h_0(\mathbf{x})>0\}$.
For~\eqref{eq:robots_force}, the physical safety function $h_0$ cannot be used directly to impose a first-order barrier condition because $L_g h_0(\mathbf{x})=0$.
A common way to recover first-order control authority is to modify the physical safety function by adding velocity-dependent terms. For instance, one may consider a barrier of the form $h(\mathbf{x})=h_0(\mathbf{x})-\eta(\mathbf{x})$, where $\eta(\mathbf{x})\ge 0$ depends on velocity-related states; related constructions appear in recursive, rectified, and backstepping-based CBF designs~\citep{Ong2024Rectified,Taylor2022Safe,Han2024Safety}. Such a modification can introduce direct input dependence into $\dot h$. However, since $h(\mathbf{x})\ge 0$ implies $h_0(\mathbf{x})\ge \eta(\mathbf{x})\ge 0$, the certified set associated with $h$ is, by construction, contained in the original physical safe set. Thus, this approach may preserve physical safety but can certify only a smaller domain than the one defined by $h_0$.

To preserve the certified domain of the physical safe set, one may first consider a positive state-dependent scaling of the original safety function,
\begin{equation}
    h(\mathbf{x}) := \sigma(\mathbf{x}) h_0(\mathbf{x}),
\end{equation}
where $\sigma(\mathbf{x})>0$ is bounded. Since $\sigma(\mathbf{x})$ is strictly positive, $h$ and $h_0$ have the same zero level set and therefore induce the same safe set. However, differentiating $h$ along the system trajectories gives
\begin{equation}
    \dot{h}
    =
    \underbrace{h_0(\mathbf{x}) L_f \sigma(\mathbf{x}) + \sigma(\mathbf{x}) L_f h_0(\mathbf{x})}_{L_f h(\mathbf{x})}
    +
    \underbrace{h_0(\mathbf{x}) L_g \sigma(\mathbf{x})}_{L_g h(\mathbf{x})} u .
\end{equation}
Thus, $L_g h(\mathbf{x})$ contains the factor $h_0(\mathbf{x})$. On the physical safety boundary, where $h_0(\mathbf{x})=0$, one has $L_g h(\mathbf{x})=0$. Hence, although the positive scaling preserves the certified domain, it does not restore first-order control authority on the boundary. This motivates the reciprocal construction introduced next.

To overcome this difficulty, we introduce the reciprocal CBF candidate
\begin{equation}\label{eq:B_robot_new}
    B(\mathbf{x})
    :=
    \frac{\lambda(\mathbf{x})}{h_0(\mathbf{x})},
\end{equation}
defined on $\operatorname{Int}(\mathcal{C}_0)$, where
\begin{equation}\label{eq:lambda_robot_new}
    \lambda(\mathbf{x})
    :=
    \varepsilon_v \arctan\!\big(k_v\,\dot{e}_r(\mathbf{x})\big)
    + \varepsilon_{\omega} \arctan\!\big(k_{\omega}\,\omega\big)
    + \pi,
\end{equation}
with $\varepsilon_v,\varepsilon_\omega\in(0,1)$ and $k_v,k_\omega>0$. This construction is motivated by two observations. First, $\lambda(\mathbf{x})$ is strictly positive and bounded on $\operatorname{Int}(\mathcal{C}_0)$, while $h_0(\mathbf{x})\to 0^+$ as $\mathbf{x}$ approaches $\partial\mathcal{C}_0$ from within $\operatorname{Int}(\mathcal{C}_0)$. Hence, $B(\mathbf{x})\to+\infty$ at the physical safety boundary, and the domain of $B$ is exactly the interior of the original physical safe set. Second, because $\lambda(\mathbf{x})$ depends on the motion-related variables $\dot e_r(\mathbf{x})$ and $\omega$, the derivative of $B$ can recover direct input dependence. The radial velocity $\dot e_r$ indicates whether the robot is instantaneously moving toward or away from the obstacle boundary, while the $\omega$-dependent term introduces direct dependence on $u_2$ through $\dot\omega=u_2$. This term also prevents loss of control authority in configurations where the radial-velocity contribution degenerates, such as when the robot heading is tangent to the obstacle-centered radial direction. The arctangent functions are used only as smooth bounded monotone maps, ensuring that $\lambda(\mathbf{x})$ remains positive and bounded while retaining nonzero input dependence.

The particular arctangent formula in~\eqref{eq:lambda_robot_new} is not essential. The key design principle is to place a positive, bounded, and motion-dependent scaling factor in the numerator of a reciprocal barrier. Indeed, the standard reciprocal barrier $1/h_0(\mathbf{x})$ does not recover input dependence when $L_g h_0(\mathbf{x})=0$, since $L_g\left({1}/{h_0(\mathbf{x})}\right)=0$.
Similarly, the direct multiplicative scaling $h(\mathbf{x})=\sigma(\mathbf{x})h_0(\mathbf{x})$ preserves the zero level set but gives $L_g h(\mathbf{x})=h_0(\mathbf{x})L_g\sigma(\mathbf{x})$,
which vanishes on the physical boundary $h_0(\mathbf{x})=0$. By contrast, the reciprocal quotient~\eqref{eq:B_robot_new} satisfies $L_g B(\mathbf{x})={L_g\lambda(\mathbf{x})}/{h_0(\mathbf{x})}$
whenever $L_g h_0(\mathbf{x})=0$. Thus, the numerator scaling restores first-order control authority through $L_g\lambda(\mathbf{x})$, while the positivity and boundedness properties of $\lambda(\mathbf{x})$ ensure that the barrier remains defined exactly on $\operatorname{Int}(\mathcal C_0)$ and blows up at the physical boundary.
More generally, the scaling function may be designed by identifying motion-related variables that characterize the evolution of the physical safety constraint and introduce control authority through $L_g\lambda$. For a relative-degree-$r$ safety constraint, $L_f^{r-1}h_0$ provides a natural starting point. The resulting $\lambda(x)$ should remain continuously differentiable, strictly positive, and bounded, with $L_g\lambda(x)\neq0$ on the intended certified domain. If one motion-dependent term degenerates in certain configurations, complementary variables associated with other input channels may be included.

The following result establishes that the proposed candidate defines a reciprocal CBF for the force-controlled nonholonomic mobile robot.

\begin{theorem}[Scaling-based reciprocal CBF]
\label{thm:scbf}
Consider the force-controlled nonholonomic mobile robot system~\eqref{eq:robots_force} and the physical safe set~\eqref{eq:safe_set_physical}, with interior $\operatorname{Int}(\mathcal{C}_0)$.
Let $B$ be defined by~\eqref{eq:B_robot_new}, where the scaling factor $\lambda$ is given by~\eqref{eq:lambda_robot_new}. Then $B$ is well defined and strictly positive on $\operatorname{Int}(\mathcal{C}_0)$, and satisfies
\begin{equation}\label{eq:B_blowup_boundary}
    \lim_{\mathbf{x}\to \partial \mathcal{C}_0}
        B(\mathbf{x})=+\infty.
\end{equation}
Furthermore, $B$ is a reciprocal CBF on $\operatorname{Int}(\mathcal{C}_0)$.
\end{theorem}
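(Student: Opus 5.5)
The plan is to verify the three claims in order: well-definedness and positivity, boundary blow-up, and the reciprocal-CBF property. The last one I will check through the equivalent characterization~\eqref{eq:RCBF_def} in Definition~\ref{def:rcbf}, not through the infimum condition~\eqref{eq:60}.

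\emph{Well-definedness and positivity.} On $\operatorname{Int}(\mathcal C_0)$ we have $h_0(\mathbf{x})>0$, hence $r=|p-c|>R>0$. So $\dot e_r$ in~\eqref{eq:erdot_formulation} is a smooth function of $(x,y,\theta,v)$ there. Since $\arctan$ is smooth with range $(-\pi/2,\pi/2)$, $\lambda$ is smooth, and it satisfies the uniform bounds
\begin{equation*}
\underline{\lambda}:=\pi\Bigl(1-\tfrac{\varepsilon_v+\varepsilon_\omega}{2}\Bigr)<\lambda(\mathbf{x})<\pi\Bigl(1+\tfrac{\varepsilon_v+\varepsilon_\omega}{2}\Bigr).
\end{equation*}
The lower bound $\underline{\lambda}$ is strictly positive because $\varepsilon_v,\varepsilon_\omega\in(0,1)$. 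Consequently $B=\lambda/h_0$ is continuously differentiable (indeed smooth) and strictly positive on $\operatorname{Int}(\mathcal C_0)$.

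\emph{Blow-up.} For any sequence in $\operatorname{Int}(\mathcal C_0)$ approaching $\partial\mathcal C_0$, continuity of $h_0$ gives $h_0\to0^+$. Then $B\ge \underline{\lambda}/h_0\to+\infty$, which is~\eqref{eq:B_blowup_boundary}. The one point needing care is that $v$ and $\omega$ are unbounded along such sequences, so the limit must not rely on convergence of $\lambda$. This is why I use the uniform lower bound $\underline{\lambda}$, which is independent of the velocity states.

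\emph{Reciprocal-CBF property.} Since $L_gh_0=0$, the quotient rule gives $L_gB=L_g\lambda/h_0$. Because $g$ in~\eqref{eq:100} selects the $v$ and $\omega$ directions, $L_g\lambda=[\partial\lambda/\partial v,\ \partial\lambda/\partial\omega]$. The quantity $\dot e_r$ does not depend on $\omega$, so
\begin{equation*}
\frac{\partial\lambda}{\partial\omega}=\frac{\varepsilon_\omega k_\omega}{1+k_\omega^2\omega^2}>0
\end{equation*}
for every state. Hence the second component of $L_gB(\mathbf{x})$ is strictly positive, and $L_gB(\mathbf{x})\neq0$ for all $\mathbf{x}\in\operatorname{Int}(\mathcal C_0)$. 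The premise of the implication~\eqref{eq:RCBF_def} is therefore never satisfied, so the implication holds vacuously for any $\alpha_B\in\mathcal K$, for instance $\alpha_B(s)=s$.

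Equivalently, in terms of~\eqref{eq:60}: choosing $u_2\to-\infty$ drives $L_fB+L_gBu$ to $-\infty$, so the infimum is $-\infty\le\alpha_B(1/B)$. I will remark that the $\omega$-term is exactly what prevents degeneracy in configurations where $\partial\dot e_r/\partial v=((p-c)^\top[\cos\theta,\sin\theta]^\top)/r$ vanishes, i.e., tangential headings.

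I expect no serious obstacle. The only delicate point is the boundary limit with unbounded velocity states, which the uniform bound $\underline{\lambda}$ resolves.
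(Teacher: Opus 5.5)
Your proposal is correct and follows the paper's proof essentially step for step. The uniform bound $\lambda>\pi-\frac{\pi}{2}(\varepsilon_v+\varepsilon_\omega)>0$ gives well-definedness, positivity and boundary blow-up, and $L_gB=L_g\lambda/h_0$ with $\partial\lambda/\partial\omega=\varepsilon_\omega k_\omega/(1+k_\omega^2\omega^2)>0$ makes the implication in Definition~\ref{def:rcbf} hold vacuously; your extra observations (smoothness of $\dot e_r$ since $r>R>0$, a blow-up argument that does not depend on the velocity states, and the infimum in~\eqref{eq:60} being $-\infty$) are sound details that the paper leaves implicit.
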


\begin{proof}
Since $\varepsilon_v,\varepsilon_\omega\in(0,1)$ and $\arctan(\cdot)\in(-\pi/2,\pi/2)$, the scaling factor $\lambda$ satisfies
\begin{equation}
    \lambda(\mathbf{x})
    >
    \pi-\frac{\pi}{2}(\varepsilon_v+\varepsilon_\omega)
    >0.
\end{equation}
Moreover, the arctangent terms are bounded. Hence, there exist constants $0<\underline{\lambda}\le \overline{\lambda}<\infty$ such that $\underline{\lambda}\le \lambda(\mathbf{x})\le \overline{\lambda}$ for all $\mathbf{x}\in\operatorname{Int}(\mathcal C_0)$. Since $h_0(\mathbf{x})>0$ on $\operatorname{Int}(\mathcal C_0)$, the function $B(\mathbf{x})=\lambda(\mathbf{x})/h_0(\mathbf{x})$ is well defined, continuously differentiable, and strictly positive on $\operatorname{Int}(\mathcal C_0)$. Furthermore, as $\mathbf{x}$ approaches $\partial\mathcal C_0$ from within $\operatorname{Int}(\mathcal C_0)$, one has $h_0(\mathbf{x})\to 0^+$, while $\lambda(\mathbf{x})\ge\underline{\lambda}>0$. Therefore, $B(\mathbf{x})\to+\infty$.

It remains to verify the reciprocal-CBF condition. Since $L_g h_0(\mathbf{x})=0$, differentiating $B=\lambda/h_0$ gives
\begin{equation}\label{eq:LgB_robot}
    L_gB(\mathbf{x})
    =
    \frac{L_g\lambda(\mathbf{x})}{h_0(\mathbf{x})}.
\end{equation}
For the input matrix $g$ in~\eqref{eq:100}, one has $L_g\lambda(\mathbf{x})=[\partial\lambda/\partial v,\ \partial\lambda/\partial\omega]$. From~\eqref{eq:lambda_robot_new},
\begin{equation}
    \frac{\partial\lambda}{\partial\omega}
    =
    \frac{\varepsilon_\omega k_\omega}{1+k_\omega^2\omega^2}
    >0.
\end{equation}
Thus, $L_g\lambda(\mathbf{x})\neq0$ for all $\mathbf{x}\in\operatorname{Int}(\mathcal C_0)$. Since $h_0(\mathbf{x})>0$ on $\operatorname{Int}(\mathcal C_0)$, it follows that $L_gB(\mathbf{x})\neq0$ on $\operatorname{Int}(\mathcal C_0)$. Therefore, the implication condition in Definition~\ref{def:rcbf} is satisfied vacuously for any $\alpha_B\in\mathcal K$. Hence, $B$ is a reciprocal CBF on $\operatorname{Int}(\mathcal C_0)$.
\end{proof}

\begin{corollary}
\label{cor:forward-invariance}
Suppose that $\mathbf{x}(0)\in\operatorname{Int}(\mathcal C_0)$ and that the closed-loop system admits a locally absolutely continuous solution on its maximal interval of existence. If the feedback controller $u=k(\mathbf{x})$ satisfies
\begin{equation}
    L_fB(\mathbf{x})+L_gB(\mathbf{x})k(\mathbf{x})
    \le
    \alpha_B\left(\frac{1}{B(\mathbf{x})}\right),
    \quad
    \mathbf{x}\in\operatorname{Int}(\mathcal C_0),
\end{equation}
then the boundary $\partial\mathcal C_0$ cannot be reached in finite time. Consequently, the solution remains in $\operatorname{Int}(\mathcal C_0)$ throughout its interval of existence. If the closed-loop solution is forward complete, then $\operatorname{Int}(\mathcal C_0)$ is forward invariant for all $t\ge 0$.
\end{corollary}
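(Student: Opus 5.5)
The plan is the standard reciprocal-barrier comparison argument (Ames et al.), applied to $B=\lambda/h_0$ along the closed-loop solution. Let $[0,T)$ be the maximal interval on which the solution stays in $\operatorname{Int}(\mathcal C_0)$, and suppose for contradiction that $T$ is finite and that $\mathbf{x}(t)\to\partial\mathcal C_0$ as $t\to T^-$; equivalently, $h_0(\mathbf{x}(t))\to 0^+$, since $h_0$ is continuous and the solution is continuous. On $[0,T)$, the map $t\mapsto B(\mathbf{x}(t))$ is locally absolutely continuous, because $B$ is $C^1$ on $\operatorname{Int}(\mathcal C_0)$ by Theorem~\ref{thm:scbf}. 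The assumed inequality on $k$ then gives, almost everywhere,
\begin{equation*}
\dot B \le \alpha_B\!\left(\frac{1}{B}\right).
\end{equation*}

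Next I will bound $B$ on $[0,T)$. Since $B>0$ and $\alpha_B\in\mathcal K$ is increasing, $\alpha_B(1/B)\le \alpha_B(1/B(\mathbf{x}(0)))$ whenever $B\ge B(\mathbf{x}(0))$. A simple argument follows: at any time where $B$ exceeds $B_0:=B(\mathbf{x}(0))$, the growth rate is at most $\alpha_B(1/B_0)$. Integrating,
\begin{equation*}
B(\mathbf{x}(t))\le B_0+\alpha_B(1/B_0)\,t .
\end{equation*}
To justify this rigorously, take the last time $s$ before $t$ with $B=B_0$ and integrate from $s$ to $t$. Alternatively, I will invoke the comparison lemma with the scalar ODE $\dot\beta=\alpha_B(1/\beta)$, whose solutions from $\beta_0>0$ exist globally and stay finite on bounded intervals. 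In either case, $B$ remains bounded on $[0,T)$ by $B_0+\alpha_B(1/B_0)T<\infty$.

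To conclude, by Theorem~\ref{thm:scbf} we have $B\ge \underline\lambda/h_0$, so $h_0(\mathbf{x}(t))\to0^+$ would force $B\to\infty$. This contradicts the bound, so the boundary cannot be reached in finite time. A solution starting in the open set $\operatorname{Int}(\mathcal C_0)$ can only leave it by crossing $\partial\mathcal C_0$, so it remains in $\operatorname{Int}(\mathcal C_0)$ on its whole existence interval. Forward completeness then gives invariance for all $t\ge0$.

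The main obstacle is the subtle point that "reaching the boundary" must be phrased in terms of $h_0\to0$ along the trajectory. We must also make sure the trajectory cannot escape $\operatorname{Int}(\mathcal C_0)$ in some other way. This is handled by the continuity of $\mathbf{x}(t)$ and by the lower bound $\lambda\ge\underline\lambda$, which is exactly where positivity of the numerator scaling enters. The comparison step for the merely absolutely continuous function $B(\mathbf{x}(t))$ is routine.
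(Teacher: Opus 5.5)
Your proposal is correct and follows essentially the same argument as the paper: a contradiction at a first boundary-hitting time $T$. The last-crossing-time integration shows that $B$ grows at most linearly above a threshold (you use $B(0)$, the paper uses $\max\{B(0),1\}$), and this is incompatible with the blow-up $B\ge\underline\lambda/h_0\to\infty$. The alternative comparison-lemma route you mention would need extra care because $\alpha_B(1/\beta)$ is only continuous, not Lipschitz, but your main argument does not rely on it.
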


\begin{proof}
Let $\mathbf{x}(t)$ be a locally absolutely continuous closed-loop solution and define $B(t):=B(\mathbf{x}(t))$. On any interval where $\mathbf{x}(t)\in\operatorname{Int}(\mathcal C_0)$, the function $B(t)$ is locally absolutely continuous. By the assumed reciprocal-CBF inequality,
\begin{equation}
    \dot B(t)
    \le
    \alpha_B\left(\frac{1}{B(t)}\right)
\end{equation}
for almost all such $t$.

Suppose, for contradiction, that the trajectory reaches $\partial\mathcal C_0$ at a finite first time $T>0$. Then $\mathbf{x}(t)\in\operatorname{Int}(\mathcal C_0)$ for all $t\in[0,T)$, and the boundary blow-up property of $B$ implies
\begin{equation}\label{eq:424}
    \lim_{t\to T^-}B(t)=+\infty.
\end{equation}
Let $M:=\max\{B(0),1\}$ and $c_M:=\alpha_B(1/M)$. Whenever $B(t)\ge M$, monotonicity of $\alpha_B$ gives
\begin{equation}
    \dot B(t)
    \le
    \alpha_B\left(\frac{1}{B(t)}\right)
    \le
    \alpha_B\left(\frac{1}{M}\right)
    =
    c_M
\end{equation}
for almost all such $t$. Therefore, once $B(t)$ is above $M$, it can grow at most linearly. More precisely, for any $t<T$ with $B(t)>M$, let $\tau\in[0,t]$ be the last time before $t$ such that $B(\tau)=M$, or set $\tau=0$ if $B(s)\ge M$ for all $s\in[0,t]$. Integrating on $[\tau,t]$ yields
\begin{equation}
    B(t)\le M+c_M(t-\tau)\le M+c_MT.
\end{equation}
If $B(t)\le M$, it is already bounded. Hence $B(t)$ is bounded on $[0,T)$, contradicting \eqref{eq:424}. Thus the boundary cannot be reached in finite time, and the solution remains in $\operatorname{Int}(\mathcal C_0)$ on its interval of existence. If the solution is forward complete, the conclusion holds for all $t\ge0$.
\end{proof}

The construction above preserves the certified interior domain of the physical safe set because the reciprocal barrier is defined on $\operatorname{Int}(\mathcal C_0)$. This domain-preservation property does not imply improved robustness, reduced control effort, or feasibility under bounded actuation. In particular, states close to the physical safety boundary or with unfavorable motion may require large corrective inputs. Thus, preservation of the full certified domain may shift conservatism from restriction of the certified state domain to increased control demand.

\begin{remark}\rm
The construction in Theorem~\ref{thm:scbf} is not unique. The specific arctangent formula in~\eqref{eq:lambda_robot_new} is one convenient choice of a positive, bounded, motion-dependent scaling factor. More generally, any $C^1$ function $\lambda$ satisfying $0<\underline{\lambda}\le \lambda(\mathbf{x})\le\overline{\lambda}<\infty$ and $L_g\lambda(\mathbf{x})\neq0$ on $\operatorname{Int}(\mathcal C_0)$ can be used in the same reciprocal construction.
The quotient form in~\eqref{eq:B_robot_new} can also be generalized. Consider
\begin{equation}
    B_{\psi}(\mathbf{x})
    :=
    \psi\!\left(\frac{\lambda(\mathbf{x})}{h_0(\mathbf{x})}\right),
\end{equation}
where $\psi:(0,\infty)\to(0,\infty)$ is $C^1$, unbounded, and satisfies $\psi'(s)>0$ for all $s>0$. Then $B_{\psi}$ is well defined on $\operatorname{Int}(\mathcal C_0)$ and satisfies $B_{\psi}(\mathbf{x})\to+\infty$ as $\mathbf{x}$ approaches $\partial\mathcal C_0$ from within $\operatorname{Int}(\mathcal C_0)$. Since $L_g h_0(\mathbf{x})=0$, one has
\begin{equation}
    L_g B_{\psi}(\mathbf{x})
    =
    \psi'\!\left(\frac{\lambda(\mathbf{x})}{h_0(\mathbf{x})}\right)
    \frac{L_g\lambda(\mathbf{x})}{h_0(\mathbf{x})}.
\end{equation}
Thus, $L_gB_{\psi}(\mathbf{x})\neq0$ whenever $L_g\lambda(\mathbf{x})\neq0$.

As a logarithmic example, one may take
\begin{equation}
    B_{\ln}(\mathbf{x})
    :=
    \ln\!\left(1+\frac{\lambda(\mathbf{x})}{h_0(\mathbf{x})}\right),
\end{equation}
for which
\begin{equation}
    L_g B_{\ln}(\mathbf{x})
    =
    \frac{L_g\lambda(\mathbf{x})}{h_0(\mathbf{x})+\lambda(\mathbf{x})}.
\end{equation}
This logarithmic transformation preserves the same domain and boundary blow-up properties while softening the growth of the barrier value and the input coefficient compared with the quotient form~\eqref{eq:B_robot_new}.
Other logarithmic or additive variants may also be constructed, provided that positivity, boundary blow-up, and nonvanishing input dependence are maintained. A systematic classification of such alternatives is left for future work.
\end{remark}

\begin{remark}\rm
The reciprocal CBF $B$ is defined on $\operatorname{Int}(\mathcal C_0)$ rather than on the closed set $\mathcal C_0$, since $B(\mathbf{x})$ becomes singular as $h_0(\mathbf{x})\to0^+$. Therefore, the safety guarantee should be interpreted as forward invariance of the interior $\operatorname{Int}(\mathcal C_0)$, or equivalently, non-reachability of the boundary $\partial\mathcal C_0$ for trajectories initialized in $\operatorname{Int}(\mathcal C_0)$. Thus, the proposed construction preserves the certified interior domain of the physical safe set, but does not assert that $B$ is well defined on the boundary.
\end{remark}

\begin{remark}\rm 
The result above is established under the unconstrained-input setting $u\in\mathbb R^2$. If the input is restricted to a compact set $\mathcal U\subset\mathbb R^2$, pointwise feasibility requires
\begin{equation}
    \inf_{u\in\mathcal U}
    \left(
        L_fB(\mathbf{x})+L_gB(\mathbf{x})u
    \right)
    \le
    \alpha_B\left(\frac{1}{B(\mathbf{x})}\right),
    \quad
    \mathbf{x}\in\operatorname{Int}(\mathcal C_0),
\end{equation}
which is not guaranteed by $L_gB(\mathbf{x})\neq0$ alone. Moreover, because $L_fB$ and $L_gB$ contain singular terms in $h_0(\mathbf{x})$, the required control action may become large as $h_0(\mathbf{x})\to0^+$. Feasibility under actuator saturation therefore requires additional analysis beyond the scope of this paper.
\end{remark}

%% -------------------------------------------------------------------
\section{Scalar Strict-Feedback Illustration}
\label{sec:strict-feedback-system}

The purpose of this section is not to develop a general extension to arbitrary higher-relative-degree or nonholonomic systems, but to provide a structural interpretation of the mechanism underlying the force-controlled nonholonomic robot considered in Section~\ref{sec:main}. The robot possesses an inherent kinematic--dynamic cascade: the position-level safety constraint evolves through motion-related states, while the control inputs enter through higher-order velocity dynamics. A scalar strict-feedback system provides a canonical setting in which this cascade structure and the role of the motion-dependent numerator scaling can be examined independently of the particular robot geometry.
For a first-state safety constraint of relative degree $n$, the quantity $L_f^{n-1}h_0$ is the highest drift derivative before the input appears explicitly in the next derivative. Thus, choosing the scaling factor as a positive bounded function of $L_f^{n-1}h_0$ provides a direct analogue of the motion-dependent numerator scaling used in Section~\ref{sec:main}. 
This viewpoint further shows that the proposed scaling mechanism is associated with the underlying higher-relative-degree cascade structure, rather than being tied to the specific obstacle geometry of the robot example.

We consider the following scalar strict-feedback system:
\begin{equation}
\begin{aligned}
\dot{x}_1 &= f_1(x_1)+g_1(x_1)x_2,\\
\dot{x}_2 &= f_2(x_1,x_2)+g_2(x_1,x_2)x_3,\\
&\ \vdots\\
\dot{x}_{n-1} &= f_{n-1}(x_1,\dots,x_{n-1})
+g_{n-1}(x_1,\dots,x_{n-1})x_n,\\
\dot{x}_n &= f_n(x_1,\dots,x_n)+g_n(x_1,\dots,x_n)u,
\end{aligned}
\label{eq:strict_feedback}
\end{equation}
where $x=(x_1,\dots,x_n)^\top\in\mathbb{R}^n$ and $u\in\mathbb{R}$. The functions $f_i$ and $g_i$ are assumed to be smooth on their domains, and the input-channel functions are assumed to be nonvanishing, namely,
\begin{equation}
    g_i(x_1,\dots,x_i)\neq 0,\qquad i=1,\dots,n.
\label{eq:nonvanishing_g}
\end{equation}

\begin{theorem}
\label{Thm:RCBF_SF}
Consider the scalar strict-feedback system~\eqref{eq:strict_feedback} under the nonvanishing input-channel condition~\eqref{eq:nonvanishing_g}. Let $h_0:\mathbb R^n\to\mathbb R$ be a physical safety function, and let $\mathcal C_0:=\{x\in\mathbb R^n:h_0(x)\ge0\}$ with $\operatorname{Int}(\mathcal C_0)=\{x:h_0(x)>0\}$. Assume that:
\begin{enumerate}
    \item[(i)] $h_0(x)=\bar h(x_1)$ for some $\bar h\in C^n(\mathbb R)$.

    \item[(ii)] $h_0$ has relative degree $n$ with respect to the input on $\operatorname{Int}(\mathcal C_0)$; that is, $L_gL_f^i h_0(x)=0$ for $i=0,\dots,n-2$, and
    \begin{equation}
        L_gL_f^{n-1}h_0(x)\neq0,
        \quad
        x\in\operatorname{Int}(\mathcal C_0).
    \label{eq:sf_relative_degree_nonzero}
    \end{equation}

    \item[(iii)] With $\Lambda(x):=L_f^{n-1}h_0(x)$, the function $\phi:\mathbb R\to(0,\infty)$ is $C^1$, and there exist constants $0<\underline\phi\le\overline\phi<\infty$ such that $\underline\phi\le\phi(\Lambda(x))\le\overline\phi$ for all $x\in\operatorname{Int}(\mathcal C_0)$.

    \item[(iv)] The scaling function satisfies $\phi'(\Lambda(x))\neq0$ for all $x\in\operatorname{Int}(\mathcal C_0)$.
\end{enumerate}
Define
\begin{equation}
    B(x)
    :=
    \frac{\phi(\Lambda(x))}{h_0(x)},
    \quad
    x\in\operatorname{Int}(\mathcal C_0).
\label{eq:sf_RCBF}
\end{equation}
Then, $B$ is well-defined, continuously differentiable, and strictly positive on $\operatorname{Int}(\mathcal C_0)$, and satisfies
\begin{equation}
    \lim_{x\to\partial\mathcal C_0}
    B(x)=+\infty.
\label{eq:sf_B_blowup}
\end{equation}
Moreover,
\begin{equation}
    L_gB(x)
    =
    \frac{
    \phi'(\Lambda(x))L_gL_f^{n-1}h_0(x)
    }{
    h_0(x)
    }
    \neq 0,
    \quad
    x\in\operatorname{Int}(\mathcal C_0).
\label{eq:sf_LgB}
\end{equation}
Consequently, under unconstrained inputs, $B$ is a reciprocal CBF on $\operatorname{Int}(\mathcal C_0)$.
\end{theorem}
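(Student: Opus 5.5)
The proof follows the same route as Theorem~\ref{thm:scbf}, with the robot-specific computations replaced by the structural assumptions (i)--(iv).

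\emph{Well-posedness, positivity and blow-up.} By (iii), $\underline\phi\le\phi(\Lambda(x))\le\overline\phi$ on $\operatorname{Int}(\mathcal C_0)$, and $h_0>0$ there. Hence $B=\phi(\Lambda)/h_0$ is well defined and strictly positive. For continuous differentiability, observe that $h_0=\bar h(x_1)$ with $\bar h\in C^n$ and that the $f_i,g_i$ are smooth. Then $L_f^{k}h_0$ is built from derivatives of $\bar h$ up to order $k$ and smooth functions, so $\Lambda=L_f^{n-1}h_0$ is at least $C^1$. Composing with the $C^1$ map $\phi$ and dividing by the positive $C^1$ function $h_0$ gives $B\in C^1(\operatorname{Int}(\mathcal C_0))$. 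For the blow-up \eqref{eq:sf_B_blowup}, approach $\partial\mathcal C_0$ from within $\operatorname{Int}(\mathcal C_0)$. Since $h_0$ is continuous, $h_0(x)\to0^+$, while the numerator stays bounded below by $\underline\phi>0$. Thus $B\ge\underline\phi/h_0\to+\infty$.

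\emph{Input coefficient.} Here the input vector field is $g=(0,\dots,0,g_n)^\top$. By the quotient rule,
\begin{equation*}
L_gB=\frac{\phi'(\Lambda)L_g\Lambda}{h_0}-\frac{\phi(\Lambda)L_gh_0}{h_0^2}.
\end{equation*}
Assumption (ii) with $i=0$ gives $L_gh_0=0$, which kills the second term. The first term is exactly \eqref{eq:sf_LgB}, since $L_g\Lambda=L_gL_f^{n-1}h_0$. The expression is nonzero because $\phi'(\Lambda(x))\neq0$ by (iv), $L_gL_f^{n-1}h_0(x)\neq0$ by \eqref{eq:sf_relative_degree_nonzero}, and $h_0(x)>0$.

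\emph{Reciprocal-CBF property.} By the equivalent characterization \eqref{eq:RCBF_def} in Definition~\ref{def:rcbf}, it suffices that the implication $L_gB(x)=0\Rightarrow L_fB(x)-\alpha_B(1/B(x))\le0$ holds for some $\alpha_B\in\mathcal K$. Since $L_gB$ never vanishes on $\operatorname{Int}(\mathcal C_0)$, the implication holds vacuously for any $\alpha_B\in\mathcal K$. Concretely, with $u\in\mathbb R$ unconstrained, $\inf_u(L_fB+L_gBu)=-\infty$.

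The only potentially delicate step is the regularity bookkeeping: checking that $\Lambda$ is $C^1$ given only $\bar h\in C^n$. This holds because $L_f^{n-1}h_0$ involves derivatives of $\bar h$ of order at most $n-1$, all of which are $C^1$. The nonvanishing condition \eqref{eq:nonvanishing_g} is not needed beyond what is already encoded in (ii); it simply explains why (ii) is natural, since $L_gL_f^{n-1}h_0=\bar h'(x_1)\prod_{i=1}^n g_i$ whenever $\bar h'\neq0$.
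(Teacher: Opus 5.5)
Your proposal is correct and follows essentially the same route as the paper's proof. Both arguments obtain positivity and blow-up from the bound $B\ge\underline\phi/h_0$ and $C^1$ regularity of $\Lambda$ from $\bar h\in C^n$. Both then use $L_gh_0=0$ to reduce $L_gB$ to $\phi'(\Lambda)L_gL_f^{n-1}h_0/h_0$, and conclude that the implication in Definition~\ref{def:rcbf} holds vacuously.

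One small wording fix in your closing aside: for the strict-feedback structure, the identity $L_gL_f^{n-1}h_0=\bar h'(x_1)\prod_{i=1}^n g_i$ holds unconditionally. The condition $\bar h'(x_1)\neq0$ is only what makes that product nonzero.
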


\begin{proof}
Since $h_0\in C^n$, the vector fields of~\eqref{eq:strict_feedback} are smooth, and $\phi\in C^1$, the function $\Lambda=L_f^{n-1}h_0$ is $C^1$, and hence $B$ is $C^1$ on $\operatorname{Int}(\mathcal C_0)$. Moreover, $h_0(x)>0$ and $\phi(\Lambda(x))\ge\underline\phi>0$ on $\operatorname{Int}(\mathcal C_0)$. Thus, $B$ is well defined and strictly positive on $\operatorname{Int}(\mathcal C_0)$.

As $x$ approaches $\partial\mathcal C_0$ from within $\operatorname{Int}(\mathcal C_0)$, one has $h_0(x)\to0^+$. Since $\phi(\Lambda(x))\ge\underline\phi$, it follows that
\begin{equation}
    B(x)
    =
    \frac{\phi(\Lambda(x))}{h_0(x)}
    \ge
    \frac{\underline\phi}{h_0(x)}
    \to+\infty.
\end{equation}
This proves the boundary blow-up property.

It remains to verify direct input dependence. Since $h_0$ has relative degree $n$, one has $L_gh_0(x)=0$ on $\operatorname{Int}(\mathcal C_0)$. Differentiating $B=\phi(\Lambda)/h_0$ along the input vector field gives
\begin{equation}
    L_gB(x)
    =
    \frac{\phi'(\Lambda(x))L_g\Lambda(x)}{h_0(x)}
    =
    \frac{\phi'(\Lambda(x))L_gL_f^{n-1}h_0(x)}{h_0(x)}.
\end{equation}
By assumptions~(ii) and~(iv), and since $h_0(x)>0$ on $\operatorname{Int}(\mathcal C_0)$, we obtain $L_gB(x)\neq0$ for all $x\in\operatorname{Int}(\mathcal C_0)$.

Therefore, under unconstrained inputs, the reciprocal-CBF implication condition in Definition~\ref{def:rcbf} is satisfied vacuously. Hence, $B$ is a reciprocal CBF on $\operatorname{Int}(\mathcal C_0)$.
\end{proof}

\begin{remark}\rm
\label{rem:sf_admissible_scaling}
The assumptions on $\phi$ in Theorem~\ref{Thm:RCBF_SF} are readily satisfied. For example, one may choose
\begin{equation}
    \phi(s)=\phi_0+\varepsilon\arctan(ks),
\label{eq:sf_arctan_phi}
\end{equation}
where $\varepsilon>0$, $k>0$, and $\phi_0>\varepsilon\pi/2$. Since $\arctan(ks)\in(-\pi/2,\pi/2)$, this choice satisfies $0<\phi_0-\varepsilon\pi/2\le \phi(s)\le \phi_0+\varepsilon\pi/2<\infty$ for all $s\in\mathbb R$. Moreover, $\phi'(s)=\varepsilon k/(1+k^2s^2)>0$ for all $s\in\mathbb R$. Hence, this arctangent-type scaling satisfies the positivity, boundedness, and nonvanishing-derivative requirements in Theorem~\ref{Thm:RCBF_SF}, and is consistent with the motion-dependent scaling used in~\eqref{eq:lambda_robot_new}.
\end{remark}

%% -------------------------------------------------------------------
\section{Numerical Examples}\label{sec:simulations}

This section presents two numerical examples. The first example uses a double-integrator system to illustrate how different barrier constructions affect the certified safe domain. The second example applies the proposed scaling-based reciprocal CBF to obstacle avoidance for a force-controlled nonholonomic mobile robot.

\subsection{Certified-Domain Comparison: Double-Integrator System}

We first consider a simple double-integrator system as a geometric illustration. This reduced example is used to visualize how different barrier constructions affect the certified safety region, rather than to provide a comprehensive closed-loop performance comparison. In particular, it shows that, in this example, additive or recursive constructions can produce certified domains whose boundaries differ from the original physical safety specification, whereas the proposed scaling-based reciprocal construction preserves the certified interior domain of the original physical safe set.

We consider the double-integrator system with state $x=(x_1,x_2)\in\mathbb{R}^2$,
\begin{equation}
    \dot{x}_1 = x_2,\qquad \dot{x}_2 = u,
\end{equation}
where $x_1$ denotes the position and $x_2$ denotes the velocity. The safety requirement is imposed on the position coordinate through the physical safety function
\begin{equation}
    h_0(x) = 1 - x_1^2.
\end{equation}
The corresponding physical safe set is
\begin{equation}
    \mathcal{C}_0
    =
    \{x\in\mathbb{R}^2:h_0(x)\ge 0\}
    =
    \{(x_1,x_2)\in\mathbb{R}^2:|x_1|\le 1\}.
\end{equation}
Thus, the position is constrained to remain in the interval $[-1,1]$, while the velocity is unrestricted.

Figure~\ref{fig:compare} compares the certified safe regions induced by several CBF constructions in the $(x_1,x_2)$ plane. The green region represents the physical safe set $\mathcal C_0$, the red regions denote physically unsafe states, and the black vertical lines indicate the physical boundary $\partial\mathcal C_0$, namely $x_1=\pm1$. For the HOCBF, ReCBF, and backstepping-based constructions, the plotted certified domain is the subset of $\mathcal C_0$ satisfying the corresponding auxiliary barrier condition. The constructions used for comparison are summarized below.

\textit{1) HOCBF.}
The first HOCBF auxiliary function is chosen as
\begin{equation}
    H_{\mathrm{HOCBF}}(x) := \dot h_0(x)+\alpha(h_0(x)),
\end{equation}
where $\alpha(s)=\gamma_0s$, $\gamma_0>0$, and $\dot h_0(x)=-2x_1x_2$. Thus, $H_{\mathrm{HOCBF}}(x)=-2x_1x_2+\gamma_0(1-x_1^2)$.
As shown in Fig.~\ref{fig:compare}, the condition $H_{\mathrm{HOCBF}}(x)\ge0$ produces a tilted certified domain in the $(x_1,x_2)$ plane. In this example, the resulting certificate covers only a subset of the physical safe set $\mathcal C_0$.

\textit{2) ReCBF.}
The rectified construction uses an activation term of the form
\begin{equation}
    H_{\mathrm{ReCBF}}(x)
    =
    h_0(x)-\mu\,\mathrm{ReLU}(-s(x)),
\end{equation}
where $\mu>0$, $\mathrm{ReLU}(r):=\max\{0,r\}$, and $s(x)=\dot h_0(x)+\gamma_0h_0(x)$. The rectification term is inactive when $s(x)\ge0$ and becomes active when $s(x)<0$. As illustrated in Fig.~\ref{fig:compare}, the resulting certified domain remains inside the physical safe set and, for the parameters used here, is less restrictive than the HOCBF-based certified domain.

\textit{3) Backstepping-based CBF.}
The backstepping-based construction is taken as
\begin{equation}
    H_{\mathrm{Backstepping}}(x)
    =
    h_0(x)-\frac{1}{2\mu}\big(x_2-\kappa(x_1)\big)^2,
\end{equation}
where $\mu>0$ and the virtual control is chosen as $\kappa(x_1)=-kx_1$ with $k>0$. This construction certifies a compact subset of $\mathcal C_0$ in the $(x_1,x_2)$ plane. Therefore, while the certified states remain physically safe, the certified domain does not coincide with the full physical safe set.

\textit{4) Scaling-based reciprocal CBF.}
To illustrate the proposed scaling idea in this simple setting, we consider the reciprocal barrier
\begin{equation}
    B(x)=\frac{\lambda_0+\varepsilon \arctan(k_v x_2)}{h_0(x)},
\end{equation}
where $\lambda_0>\varepsilon\pi/2$, $\varepsilon>0$, and $k_v>0$, so that the numerator remains strictly positive. As shown in Fig.~\ref{fig:compare}, this construction highlights the key geometric feature of the proposed approach: the reciprocal barrier is modulated in a state-dependent manner without altering the original physical safety boundary. Consequently, the certified domain of the barrier coincides with the interior of the physical safe set.

For the certified-domain comparison in Fig.~\ref{fig:compare}, the parameters are chosen as
$\gamma_0=1$ for the HOCBF;
$\gamma_0=1$ and $\mu=1$ for the ReCBF;
$\mu=1$ and $k=1$ for the backstepping-based CBF; and
$\lambda_0=2$, $\varepsilon=0.5$, and $k_v=0.3$ for the proposed scaling-based reciprocal CBF.

For the closed-loop illustration in Fig.~\ref{fig:compare_2}, all methods use the common nominal controller
$u_{\mathrm{nom}}=-k_p x_1-k_d x_2$, with $k_p=2$ and $k_d=2.5$.
At each integration step, the control input is obtained from the scalar safety-filter problem
$u^\star = \arg\min_{u\in\mathbb R} \frac{1}{2}(u-u_{\rm nom})^2$ subject to the corresponding barrier constraint.
For the closed-loop simulations, the parameters are selected as follows:
$\gamma_0=2$ and $\gamma_1=3$ for the HOCBF;
$\gamma_0=1$, $\mu=0.7$, and $\alpha_H(s)=2s$ for the ReCBF;
$k=1$, $\mu=0.18$, and $\alpha_H(s)=2s$ for the backstepping-based CBF; and
$\lambda_0=2$, $\varepsilon=0.5$, $k_v=0.3$, and $k_B=2$ for the proposed reciprocal CBF, with $\alpha_B(1/B)=k_B/B$.
The trajectories are simulated over $4$~s using explicit Euler integration with a step size of $0.001$~s, without input saturation.
Figure~\ref{fig:compare_2} complements the certified-domain comparison in Fig.~\ref{fig:compare} by illustrating the corresponding closed-loop behavior under a common nominal controller and method-specific barrier parameters.
The common initial condition is
$(x_1(0),x_2(0))=(0.8,2.5)$.
For the parameters used in the closed-loop simulations, this initial condition lies inside the physical safe set $C_0$, but outside the certified domains associated with the HOCBF, ReCBF, and backstepping-based constructions.
Therefore, the corresponding trajectories should not be interpreted as a comprehensive closed-loop performance comparison or as evidence of general superiority or inferiority among the methods.
Rather, they illustrate the consequence of certified-domain shrinkage: when an initial condition is not covered by a method's certified domain, the associated safety certificate does not apply from that state.

By contrast, for the proposed scaling-based reciprocal CBF, the certified domain coincides with $\operatorname{Int}(C_0)$.
Hence, the same initial condition is certified by the proposed construction, and the corresponding trajectory remains within the physical safe set.
Thus, Fig.~\ref{fig:compare_2} provides a closed-loop illustration of the certified-domain distinction shown in Fig.~\ref{fig:compare}, rather than claiming a universal performance advantage under arbitrary initial conditions or controller tunings.

\begin{figure}
    \centering
    \includegraphics[width=0.8\linewidth]{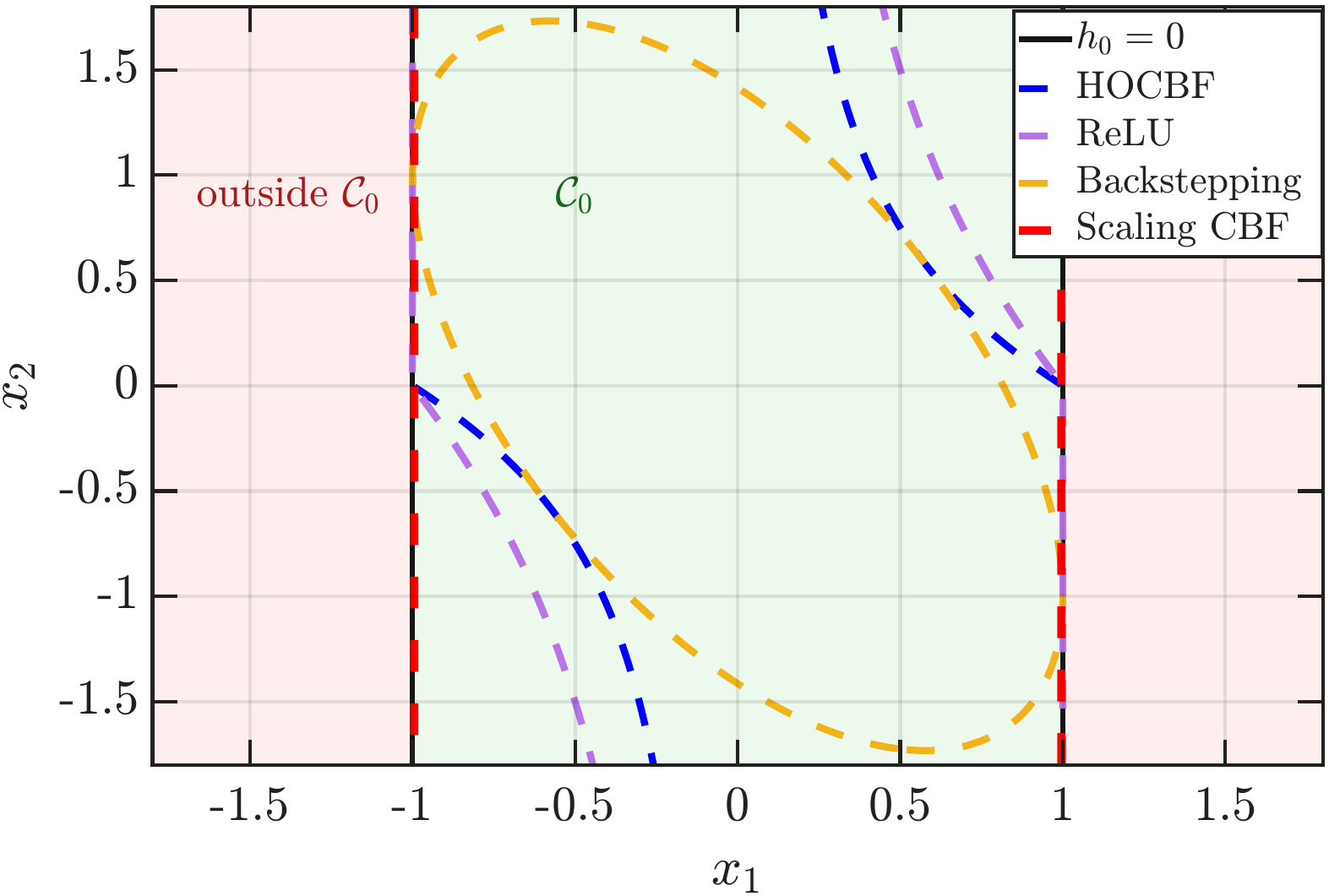}
    \caption{Certified-domain comparison for different CBF constructions in the $(x_1,x_2)$ plane. The HOCBF, ReCBF, and backstepping-based constructions certify subsets of $\mathcal C_0$, whereas the proposed scaling-based reciprocal CBF preserves the certified interior domain of the physical safe set.}
\label{fig:compare}
\end{figure}

\begin{figure}[t]
    \centering
    \includegraphics[width=0.8\linewidth]{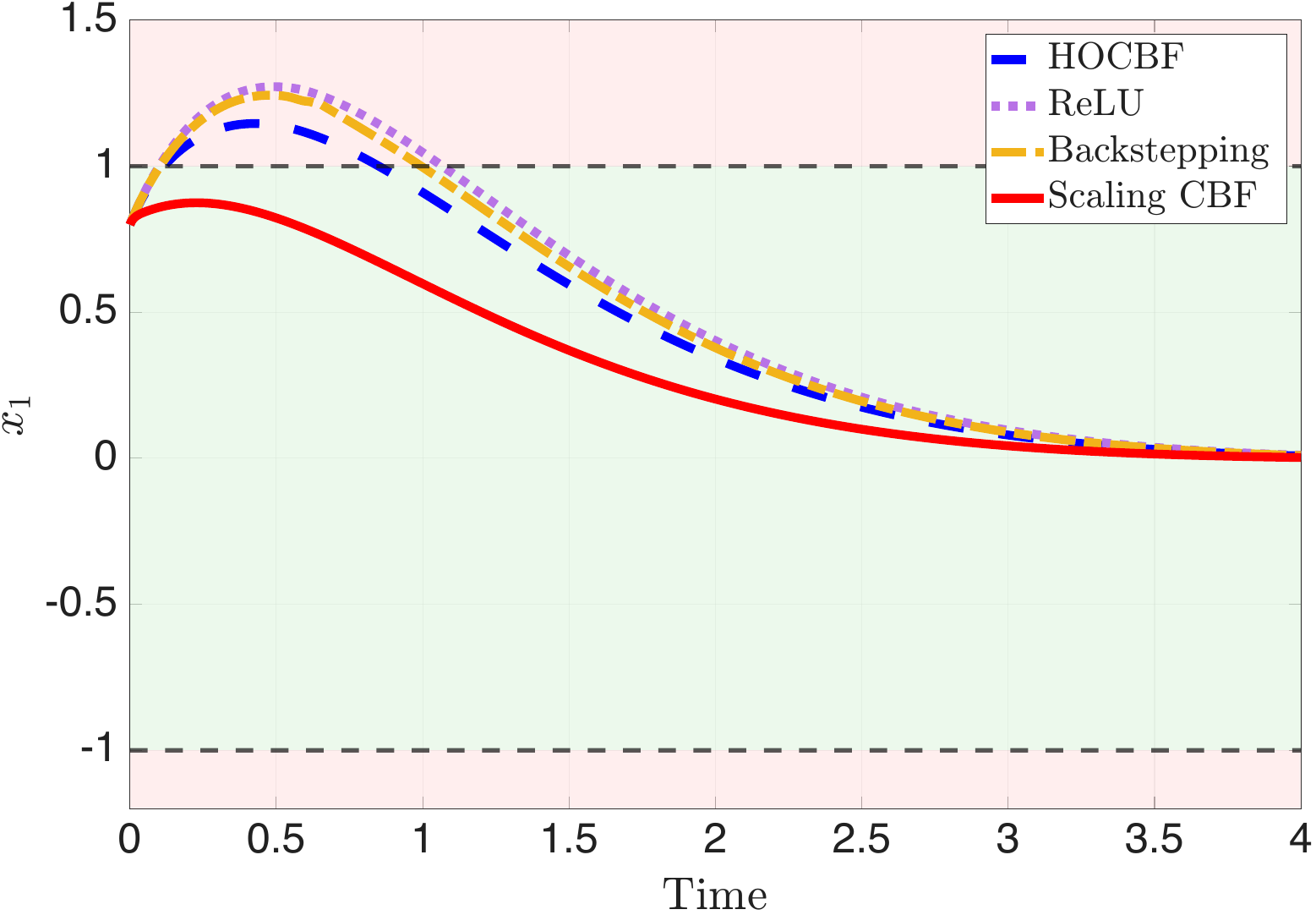}
    \caption{Illustrative position trajectories of the double-integrator system from $(x_1(0),x_2(0))=(0.8,2.5)$. The plot illustrates certified-domain effects rather than a general closed-loop performance comparison.}
\label{fig:compare_2}
\end{figure}

%% -------------------------------------------------------------------
\subsection{Obstacle Avoidance for a Force-Controlled Nonholonomic Robot}

We next illustrate the proposed scaling-based reciprocal CBF on the force-controlled nonholonomic mobile robot system~\eqref{eq:robots_force}. The purpose of this example is to demonstrate implementation on the full robot dynamics and integration with an optimization-based safety controller.

The control objective is to drive the robot toward the origin while ensuring collision avoidance with a circular obstacle centered at $c=[2,2]^\top$ with radius $R=1$. The corresponding physical safety function is $h_0(\mathbf{x}) = (x-2)^2 + (y-2)^2 - 1$.
The reciprocal CBF $B$ is constructed as in~\eqref{eq:B_robot_new}, with the scaling factor $\lambda(\mathbf{x})$ given by~\eqref{eq:lambda_robot_new}. The design parameters are selected as $\varepsilon_v = 0.8, \varepsilon_\omega = 0.15, k_v = 3.2, k_\omega = 0.3$.
The corresponding CBF constraint is imposed in the form
\begin{equation}
    L_f B(\mathbf{x})-\alpha_B\!\left(\frac{1}{B(\mathbf{x})}\right) + L_g B(\mathbf{x})u \le 0,
\end{equation}
where the class-$\mathcal{K}$ function is chosen as $\alpha_B ({1}/{B(\mathbf{x})})={k_B}/{B(\mathbf{x})}$ with $k_B = 2$. 
The nominal stabilizing controller and CLF--CBF--QP implementation follow~\citep{Han2024Safety}, with the proposed reciprocal-CBF constraint replacing the original safety constraint and all other controller and QP parameters unchanged.
The simulations are performed in MATLAB/Simulink over a time horizon of $20$~s using the fixed-step fourth-order Runge--Kutta (\texttt{ode4}) solver with a step size of $0.001$~s.

This nonholonomic robot example is not intended as a full benchmark comparison with HOCBF, ReCBF, or backstepping-based safety filters. Rather, it illustrates that the proposed scaling-based reciprocal CBF can be implemented on the full force-controlled nonholonomic dynamics within an optimization-based safety controller. The robot is simulated from the five initial conditions in Table~\ref{tab:nh_initial_conditions}, chosen to represent different approach directions and configurations relative to the obstacle and the goal. These cases do not imply that the initial conditions lie outside the certified domains of all benchmark methods. Since no actuator saturation is imposed, the reported inputs should be interpreted as requested safety-filter inputs rather than as a bounded-input feasibility certificate. The resulting trajectories are shown in Fig.~\ref{fig:wmr_traj_sin}. For all tested initial conditions, the robot successfully avoids the circular obstacle while moving toward the origin.

\begin{table}[t]
\caption{Initial conditions for the nonholonomic robot simulations.}
\label{tab:nh_initial_conditions}
\centering
\renewcommand{\arraystretch}{1.1}
\begin{tabular}{c c c c c c}
\hline
Case & $x(0)$ & $y(0)$ & $\theta(0)$ & $v(0)$ & $\omega(0)$ \\
\hline
1 & $4.30$ & $2.60$ & $0.95$ & $0.73$ & $0.53$ \\
2 & $0.30$ & $2.40$ & $1.50$ & $0.12$ & $0.12$ \\
3 & $4.00$ & $3.30$ & $1.70$ & $0.10$ & $0.22$ \\
4 & $4.80$ & $3.40$ & $1.60$ & $0.82$ & $0.26$ \\
5 & $4.50$ & $1.80$ & $0.40$ & $0.52$ & $0.37$ \\
\hline
\end{tabular}
\end{table}

\begin{figure}[t]
    \centering
    \includegraphics[width=0.8\linewidth]{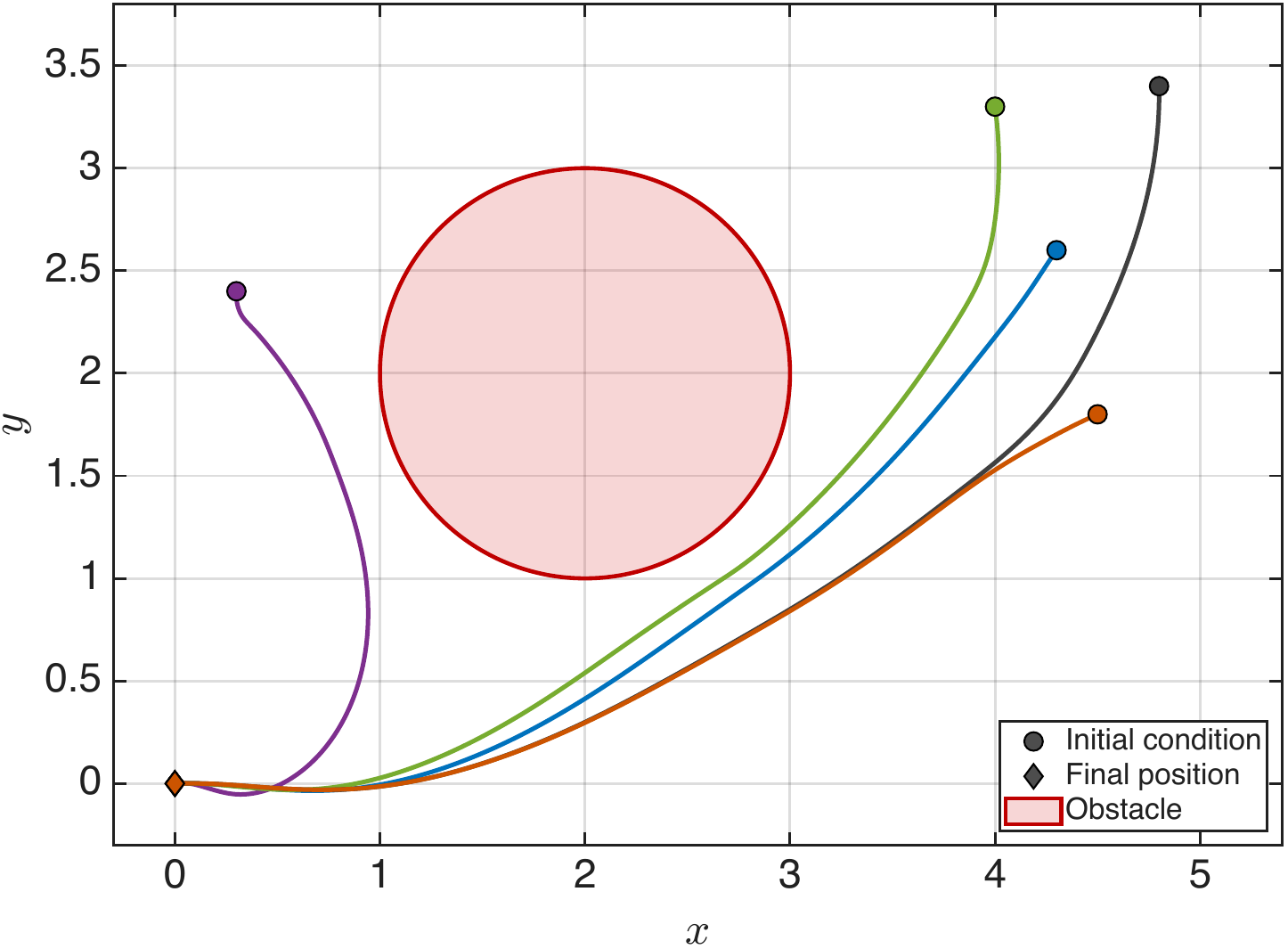}
    \caption{Closed-loop trajectories of the force-controlled nonholonomic robot under the optimization-based safety controller with the proposed scaling-based reciprocal CBF.}
    \label{fig:wmr_traj_sin}
\end{figure}

Table~\ref{tab:nh_summary} summarizes the safety diagnostics and requested inputs for the five nonholonomic robot simulations. In all cases, the signed clearance $d(t)$ and the physical safety function $h_0(t)$ remain positive, so the robot does not reach the obstacle boundary. The smallest clearance is $0.2064$, and the smallest value of $h_0(t)$ is $0.4553$, both occurring in Case~3.
The input summary indicates that the safety filter may request large transient inputs. Across the five simulations, the largest requested inputs are $\max |u_1(t)|=149.24$ and $\max |u_2(t)|=91.24$, both occurring in Case~4. Since no actuator saturation is imposed, these values quantify the requested inputs of the safety filter but do not imply feasibility under bounded actuation. 
These results illustrate the trade-off associated with certificate-domain preservation: retaining the full interior of the physical safe set does not guarantee moderate control demand. In contrast, auxiliary-domain constructions may exclude some such states from their certified regions rather than certify them under potentially large corrective inputs.

\begin{table}[t]
\caption{Minimum safety margins and maximum requested inputs for the mobile robot simulation cases.}
\label{tab:nh_summary}
\centering
\renewcommand{\arraystretch}{1.1}
\begin{tabular}{c c c c c}
\hline
Case & $\min d(t)$ & $\min h_0(t)$ & $\max |u_1(t)|$ & $\max |u_2(t)|$ \\
\hline
1 & $0.3055$ & $0.7044$ & $68.22$ & $30.97$ \\
2 & $0.2991$ & $0.6877$ & $15.62$ & $37.33$ \\
3 & $0.2064$ & $0.4553$ & $30.37$ & $17.03$ \\
4 & $0.4977$ & $1.2432$ & $149.24$ & $91.24$ \\
5 & $0.5039$ & $1.2617$ & $55.91$ & $18.18$ \\
\hline
\end{tabular}
\end{table}

%%----------------------------------------------------------------
\section{Conclusion} \label{sec:conclusion}

This paper proposed a scaling-based reciprocal CBF construction for force-controlled nonholonomic mobile robots subject to relative-degree-two safety constraints. By placing a positive motion-dependent scaling factor in the numerator of a reciprocal barrier, the proposed construction preserves the certified interior domain of the original physical safe set while restoring first-order control authority. For the considered robot model, we established the reciprocal-CBF property and proved forward invariance of the physical safe-set interior under controllers satisfying the induced reciprocal-CBF condition. A scalar strict-feedback system was further used to provide a structural interpretation of the kinematic--dynamic cascade underlying the proposed construction, without claiming a general solution for arbitrary higher-relative-degree or nonholonomic systems.
A fully systematic or optimal synthesis and tuning procedure for scaling functions in broader nonholonomic models is left for future work.
Numerical examples demonstrated the induced safe-set geometry and its integration with an optimization-based obstacle-avoidance controller. These results suggest that scaling-based reciprocal barriers offer a promising way to avoid additional certified-domain shrinkage while retaining motion-dependent safety shaping.

%% -------------------------------------------------------------------
\section*{Acknowledgment} %% ASME requests this exact spelling, singular.
Bo Wang would like to express his gratitude to Professor Miroslav Krsti{\'c} of UCSD for the fruitful discussions regarding the construction of the control barrier functions. The authors also thank the Associate Editor and the anonymous reviewers for their constructive comments and suggestions.

The work of Bo Wang was partially supported by the PSC-CUNY Research Award from The City University of New York.

\section*{Conflict of Interest}
There are no conflicts of interest.

\section*{Data Availability Statement}
The datasets generated and supporting the findings of this article are obtainable from the corresponding author upon reasonable request.

%% -------------------------------------------------------------------
\bibliographystyle{asmejour}   
%% .bst file that follows ASME journal format. Do not change.
\bibliography{asmejour} %% <=== change this to name of your bib file
%% To omit final list of figures and tables, use the class option [nolists]
\end{document}